\theoremstyle{plain}
\newtheorem{thm}{Theorem}
\newtheorem{theorem}{Theorem}
\newtheorem*{thm*}{Theorem}
\newtheorem*{cor*}{Corollary}
\newtheorem*{conj*}{Conjecture}
\newtheorem*{lemma*}{Lemma}
\newtheorem{lemma}[thm]{Lemma}
\newtheorem*{prop*}{Proposition}
\newtheorem{proposition}[thm]{Proposition}
\theoremstyle{definition}
\newtheorem*{defn*}{Definition}
\newtheorem*{rems*}{Remarks}
\newtheorem*{proof*}{Proof}
\newtheorem{prel*}{Preliminaries}
\newtheorem{examples*}{Examples}
\newcommand\coF{{}^{\mathcal{C}}\kern-2pt\Lambda}
\newcommand\cFTs{{}^{\Phi}\overline{T}\kern-1pt{}^*}
\newcommand\dbar{\overline{\pa}}
\newcommand\cG{\mathcal{G}}
\newcommand\cL{\mathcal{L}}
\newcommand\cO{\mathcal{O}}
\newcommand\CC{\mathbb C}
\newcommand\PP{\mathbb P}
\newcommand\RR{\mathbb R}
\newcommand\ZZ{\mathbb Z}
\newcommand\cC{\mathcal C}
\newcommand\cFNs{{}^{\Phi}\overline N\kern-1pt{}^*}
\newcommand\pa{\partial}
\renewcommand\Re{\operatorname{Re}}
\begin{document}

\title[Holomorphic determinant line bundles]
{Holomorphic Quillen determinant line bundles \\on integral compact K\"ahler manifolds}

\author{Rukmini Dey}
\address{Department of Mathematics,
Harishchandra Research Institute,
Allahabad, India}
\email{rkmn@hri.res.in}

\author{Varghese Mathai}
\address{Department of Pure Mathematics, University of Adelaide,
Adelaide 5005, Australia}
\email{mathai.varghese@adelaide.edu.au}

\dedicatory{Dedicated to the memory of Daniel G. Quillen}

\begin{abstract}

We show that any compact Kahler manifold with integral Kahler form, parametrizes a natural holomorphic family of Cauchy-Riemann operators on the Riemann sphere such that the Quillen determinant line bundle of this family is isomorphic to
a sufficiently high tensor power of the holomorphic line bundle determined by the integral Kahler form. We also establish a symplectic version of the result. We conjecture that an equivariant version of our result is true.
\end{abstract}

\thanks{{\em Acknowledgments.} The first author thanks the SYM project grant 
for support and the second author thanks the Australian Research Council
   for support and the Harishchandra Research Institute for hospitality during the period 
when this work was completed. Both authors thank M.S. Narasimhan for pointing out the
error in the statement of Kodaira's embedding theorem in the previous version of the paper. 
We also thank the referees for suggestions leading to improvements in the paper.The first author would like to thank Professor Leon Takhtajan for introducing 
her to the Quillen determinant bundle during her thesis work. V.M. was fortunate to have been a Ph.D. student
of Daniel Quillen at MIT}
  
\keywords{holomorphic Quillen determinant line bundles, vortex moduli space, Kodaira embedding theorem,
Gromov's embedding theorem}

\subjclass[2010]{58J52, 14D21, 53D50, 32Q15, 53D30}
\maketitle


\section{Introduction}

In geometric quantization, given a type $(1,1)$ integral form $\xi$ on a compact K\"ahler manifold, there is a holomorphic line bundle $\cL$ on the manifold with connection and curvature equal to $\xi$.  For positive integral $(1,1)$ forms, we prove 
the following partial refinement.

\begin{theorem}\label{thm:main}
Any compact K\"ahler manifold $M$ with integral K\"ahler form $\omega$, parametrizes a natural holomorphic family of 
Cauchy-Riemann operators $\{\dbar_z: z\in M\}$
on $\CC P^1$ such that the Quillen determinant line bundle $\det(\dbar) \cong \cL^{\otimes k}$ as holomorphic line bundles, where $\cL$ is the holomorphic 
line bundle determined by $\omega$, for some sufficiently large $k$. 
\end{theorem}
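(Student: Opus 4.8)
The plan is to reduce the theorem to an explicit Quillen--determinant computation over a complex projective space by means of the Kodaira embedding theorem, and then to pull the answer back along the embedding. Since $\omega$ is a positive integral $(1,1)$-form, $\cL$ is a positive holomorphic line bundle, so for all sufficiently large $k$ the power $\cL^{\otimes k}$ is very ample; fixing such a $k$ and a basis of $H^{0}(M,\cL^{\otimes k})$ gives a holomorphic embedding $\iota\colon M\hookrightarrow\CC P^{N}$, where $N+1=\dim H^{0}(M,\cL^{\otimes k})$, with $\iota^{*}\mathcal{O}_{\CC P^{N}}(1)\cong\cL^{\otimes k}$. (The embedding, and hence the family constructed below, is natural up to the $\PGL(N+1)$-action on the choice of basis; the isomorphism class $\cL^{\otimes k}$ is independent of all choices.) Because the formation of the determinant line bundle commutes with base change, it suffices to produce over $\CC P^{N}$ a natural holomorphic family of Cauchy--Riemann operators on $\CC P^{1}$ whose Quillen determinant line bundle is $\mathcal{O}_{\CC P^{N}}(1)$; the family $\{\dbar_{z}:z\in M\}$ of the statement is then its pull-back along $\iota$.

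For the family over $\CC P^{N}$ I would use the classical identification $\CC P^{N}\cong\PP\bigl(H^{0}(\CC P^{1},\mathcal{O}(N))\bigr)\cong\Sym^{N}\CC P^{1}$, the space of effective divisors of degree $N$ on $\CC P^{1}$ --- classically the moduli space of $N$ abelian vortices on $\CC P^{1}$, which is the point of contact with vortex moduli. Let $\mathcal{D}\subset\CC P^{N}\times\CC P^{1}$ be the universal incidence divisor $\{([s],p):s(p)=0\}$. Its defining equation is linear in $s$ and homogeneous of degree $N$ in $p$, so $\mathcal{O}(\mathcal{D})\cong\mathcal{O}_{\CC P^{N}}(1)\boxtimes\mathcal{O}_{\CC P^{1}}(N)$. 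Set $\mathbb{L}:=\mathcal{O}(\mathcal{D})\otimes p_{2}^{*}\mathcal{O}_{\CC P^{1}}(-N-2)\cong\mathcal{O}_{\CC P^{N}}(1)\boxtimes\mathcal{O}_{\CC P^{1}}(-2)$ (pull-back along the second projection), a holomorphic line bundle on $\CC P^{N}\times\CC P^{1}$ whose restriction to each slice $\{z\}\times\CC P^{1}$ has degree $-2$ (a twist of $K_{\CC P^{1}}$). Equipping $\CC P^{1}$ with the Fubini--Study metric and $\mathbb{L}$ with a smooth Hermitian metric --- for instance the fibrewise solution of the vortex equations --- the fibrewise Cauchy--Riemann operators $\dbar_{z}$ of $\mathbb{L}$ restricted to $\{z\}\times\CC P^{1}$ form a holomorphic family of Cauchy--Riemann operators on a fixed $C^{\infty}$ line bundle of degree $-2$ over $\CC P^{1}$.

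Now the determinant computation. Each $\dbar_{z}$ acts on $\mathcal{O}_{\CC P^{1}}(-2)$, so $H^{0}$ vanishes and $\dim H^{1}=1$ is constant; hence the Quillen determinant line bundle of the family is the honest line bundle $R^{1}\pi_{*}\mathbb{L}$ on $\CC P^{N}$, where $\pi\colon\CC P^{N}\times\CC P^{1}\to\CC P^{N}$. By flat base change and the projection formula,
\[
R^{1}\pi_{*}\mathbb{L}\;\cong\;\mathcal{O}_{\CC P^{N}}(1)\otimes H^{1}\!\bigl(\CC P^{1},\mathcal{O}(-2)\bigr)\;\cong\;\mathcal{O}_{\CC P^{N}}(1),
\]
the last step by Serre duality $H^{1}(\CC P^{1},\mathcal{O}(-2))\cong H^{0}(\CC P^{1},\mathcal{O})^{*}\cong\CC$. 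Pulling back along $\iota$, the determinant line bundle of $\{\dbar_{z}:z\in M\}$ is $\iota^{*}\mathcal{O}_{\CC P^{N}}(1)\cong\cL^{\otimes k}$, as required. (With the opposite normalization of the determinant line one would instead use the family of degree-$N$ operators coming directly from $\mathcal{O}(\mathcal{D})$ and obtain $\mathcal{O}_{\CC P^{N}}(N+1)$, which pulls back to $\cL^{\otimes k(N+1)}$ --- again of the required form. That the answer is a \emph{positive} power of the hyperplane bundle is forced by Quillen's theorem, according to which the curvature of the Quillen metric on $\det(\dbar)$ is a positive multiple of the natural $L^{2}$-K\"ahler form on the parameter space, which on $\Sym^{N}\CC P^{1}\cong\CC P^{N}$ is a positive multiple of the Fubini--Study form.)

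The individual ingredients --- very ampleness of $\cL^{\otimes k}$ for large $k$, the bidegree of the incidence divisor, cohomology and base change, the projection formula --- are all routine, so the real work lies in the two structural points that give the statement content. First, one must verify that the construction genuinely produces a \emph{holomorphic} family of Cauchy--Riemann operators, so that $\det(\dbar)$ carries a holomorphic structure and the isomorphism with $\cL^{\otimes k}$ is one of holomorphic line bundles rather than merely of topological ones; this is where Quillen's analytic construction of the determinant line and its holomorphic structure must be invoked carefully. Second, one must pin down the determinant line bundle on the nose, i.e.\ reconcile the base-change computation above with a fixed determinant-line convention so as to land on a positive power of $\mathcal{O}_{\CC P^{N}}(1)$; Quillen's curvature formula is the most robust way to see both the positivity and the precise power, and I expect this bookkeeping --- not any hard estimate --- to be the main obstacle.
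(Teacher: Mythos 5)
Your proposal is correct, and it shares the paper's overall architecture --- reduce to $\CC P^N$ with $N+1=\dim H^0(M,\cO(\cL^{\otimes k}))$ via the Kodaira embedding theorem, using $\phi_k^*\cO(1)\cong\cL^{\otimes k}$, and realize $\CC P^N$ as $\Sym^N\CC P^1$, the moduli space of $N$ vortices --- but you prove the key step over $\CC P^N$ (the paper's Proposition \ref{prop:proj}) by a genuinely different route. The paper works with the vortex family itself, namely the operators $\dbar+A_z^{0,1}$ acting on the degree-$N$ bundle $L$, where $(A_z,\Psi_z)$ is the vortex solution located at $z\in\Sym^N\CC P^1$; it identifies $\det(\dbar)$ with $\cO(1)$ by computing the curvature of a \emph{modified} Quillen metric, matching it with the Manton--Nasir K\"ahler form, tuning the radius of the sphere to $R^2=\tfrac12+N$ so that $\tfrac{1}{2\pi}[\Omega]=[\omega_{FS}]$, and finally invoking the triviality of the Picard variety of $\CC P^N$ to upgrade equality of first Chern classes to a holomorphic isomorphism. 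You instead fix an explicit algebraic family --- the universal incidence divisor twisted down to fibrewise degree $-2$ --- and compute $\det(\dbar)=R^1\pi_*\mathbb{L}\cong\cO_{\CC P^N}(1)$ directly by base change and Serre duality. Your route is shorter, avoids both the analytic curvature computation and the radius tuning, and yields the holomorphic isomorphism canonically rather than via $c_1$ plus $\mathrm{Pic}^0(\CC P^N)=0$; the paper's route buys the gauge-theoretic interpretation of the operators and an explicit Hermitian metric on $\det(\dbar)$ whose curvature is the vortex K\"ahler form, which is the extra geometric content it is after. One place where your bookkeeping is arguably sharper than the paper's: with Quillen's convention $(\Lambda^{\mathrm{top}}H^0)^*\otimes\Lambda^{\mathrm{top}}H^1$, the untwisted degree-$N$ universal family $\cO(\mathcal D)$ would give $\cO(-N-1)$, so some normalization of the family --- your twist by $p_2^*\cO(-N-2)$, or in the paper the particular descent data and modified metric --- is genuinely needed to land on a \emph{positive} power of the hyperplane bundle, and your version makes that choice explicit.
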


Here we recall that in the seminal paper \cite{Quillen}, Quillen constructed a holomorphic line bundle $\mathfrak L$, termed the determinant line bundle,  
associated to any holomorphic family of Cauchy-Riemann operators over a compact Riemann surface. Quillen also constructed a hermitian metric on $\mathfrak L$, and calculated its curvature. His construction is referred to in the statement of  Theorem \ref{thm:main}.

The strategy of the proof of Theorem \ref{thm:main} goes briefly as follows. We first establish the theorem for complex projective spaces $\CC P^N$ for all
integers $N$ and for $k=1$. This is achieved by viewing $\CC P^N$ as the moduli space of $N$-vortices on $\CC P^1$. 
The position of each $N$-vortex $z$ determines a Cauchy-Riemann operator $\dbar_z$ on $\CC P^1$, and the 
Quillen determinant line bundle,
denoted $\det(\dbar)$, is isomorphic as a holomorphic line bundle to the hyperplane bundle $\cO(1)$ on $\CC P^N$, see \S \ref{sect:vortex}.
The next step is to apply the Kodaira embedding theorem to establish the theorem in general, see \S \ref{sect:kodaira}. 
In \S \ref{sect:gromov}, we prove 
a symplectic analogue of our theorem, established using Gromov's embedding theorem.
In  \S \ref{sect:conj}, we conjecture that an equivariant version of our result is true.

Our use of the vortex moduli space seems to be novel: the moduli space is usually a subject of investigations for its own sake, rather than a tool for applications. 
As an application of our theorem, recall that one can relate the holomorphic
sections of the hyperplane bundle $\cO(1)$ on $\CC P^k$ to coherent states in physics, cf. ~\cite{R1,S1}.
Our use of the Kodaira embedding theorem shows that the same is true for holomorphic sections of $\cL^{\otimes k}$
for some sufficiently large $k$, and by our theorem, the same is also true for $\det(\dbar) $, which is a topic that we plan to explore. 

\section{Moduli space of N-vortices on a sphere and determinant line bundle}\label{sect:vortex}

\subsection{Vortex equations}

The vortex equations are as follows. Let $\Sigma$ be a compact Riemann 
surface (in our case the sphere) and let $\omega = \frac{1}{2}h^2 dz \wedge d \bar{z}$ be the purely imaginary 
volume form on it, (i.e. $h$ is real). 
Let  $A$ be a unitary connection on a principal $U(1)$ bundle $P$
 i.e. $A$ is a purely imaginary valued one form i.e. $A = A^{(1,0)} + A^{(0,1)}$ such that $A^{(1,0)} = -\overline{A^{(0,1)}}$.   
Let $L$ be a complex line bundle associated to $P$ by the defining 
representation. We denote by the same $A$ the unitary connection on $L$.
  Let $\Psi$ be a section of $L$, i.e. 
$\Psi \in  \Gamma(\Sigma,L)$ and $\bar{\Psi}$ be a section of its dual, $\bar{L}$. 
There is a Hermitian metric $H$ on $L$, i.e. the inner product $<\Psi_1, \Psi_2>_H = \Psi_1 H \bar{\Psi}_2$ is a smooth function on $\Sigma$. (Here $H$ is real). 

The pair $(A, \Psi)$ will be said to satisfy the vortex equations if

\hspace{1in} $(1)$ $ \rm{\;\;\;\;\;}$ $ F(A) = \frac{1}{2} (1-|\Psi|^2_H) \omega,$

\hspace{1in} $(2)$ $\rm{\;\;\;\;\;}$ $\bar{\partial}_A \Psi = 0,$\\
where $F(A)$ is the curvature of the connection $A$ and $d_A = \partial_A + \bar{\partial}_A $ is the decomposition of the covariant derivative operator
into $(1,0)$ and $(0,1)$ pieces.  
Let ${\mathcal S}$ be the space of solutions to $(1)$ and $(2)$.
 There is a gauge group $G$ acting on the space of $(A, \Psi)$ which leaves the equations invariant. We take the group $G$ to be abelian and locally it looks like ${\rm Maps} (\Sigma, U(1)).$ If $g$ is an $U(1)$ gauge transformation then 
$(A_1, \Psi_1)$ and $(A_2, \Psi_2)$ are gauge equivalent if 
$A_2 = g^{-1}dg + A_1 $ and $\Psi_2 = g^{-1} \Psi_1$. 
  Taking the quotient by the gauge group of ${\mathcal S}$ gives  the moduli 
space of solutions to these 
equations and is denoted by ${\mathcal M}$.

$\displaystyle \int_{\Sigma} F(A) = 4 \pi N$ where $N$ is an integer called the vortex 
number.

There is a theorem by Taubes (cf.\cite{Taubes}) and Bradlow ~\cite{B}, 
 which says that the moduli space of vortices
is parametrised by the zeroes of $\Psi$. 
Thus the moduli space is the symmetric product, ${\rm Sym}^{N}(\Sigma).$ When the Riemann surface is 
$\CC P^1$, the moduli space is  complex projective space, ${\rm Sym}^{N}(\CC P^1) = \CC P^N.$

\subsection{The metric and symplectic form}

Let ${\mathcal A} $ be the space of all unitary connections on $L$ and 
$\Gamma (\Sigma, L)$ be sections of $L$.

There is a natural norm on ${\mathcal A} $ and $\Gamma (\Sigma, L)$, 
as follows:

Let $A = A_0 + \alpha$ where $\alpha $ is a unitary one form on the Riemann surface and $A_0$ is fixed.
 Then we define $||A||^2 = \int_{\Sigma} *\alpha \wedge  \alpha $
and $|| \Psi ||^2 = \int_{\Sigma} | \Psi|_H^2 d \mu $. (The Hodge star acting on an imaginary valued  $1$ form is spelt out below). 

In general, if $\tau$ is a $k$- form on a manifold $\Sigma$,  
the norm of $\tau$ is given by $ \int_{\Sigma} * \tau \wedge \tau ,$ 
$*$ being the Hodge star.

With these norms we consider the Sobolev space $H_3 ( {\mathcal A} )$ to be
$\{ A \in L^2({\mathcal A}) {\rm \;} | {\rm \;} || A ||^2 +   || F(A)||^2 < \infty,  \} $ (since $d (F(A)) = 0$ on a Riemann surface), 
where $F(A)$ is defined, for the time being, in the distributional sense.
Let $H_3 (\Gamma (\Sigma, L)  )$ to be
$\{ \Psi \in L^2(\Gamma (\Sigma, L)) {\rm \;} | {\rm \;} || \Psi ||^2  +  || \nabla \Psi||^2  +  || \nabla^2 \Psi||^2   < \infty \}$ where the covariant derivative $\nabla$ is taken w.r.t a fixed 
connection, say $A_0$,  and is defined (for the time being) 
in the distributional sense. 
 
On a Riemann surface $(A, \Psi) \in H_3$ implies (by Sobolev  lemma  ) that 
$A$ and $ \Psi$ are actually $C^1$ (~\cite{GH}, page 86).

Recall that  $H_3 (\Gamma (\Sigma, L)  )$ and $H_3 ( {\mathcal A} )$ are 
Hilbert spaces, a fact that will be useful in Marsden-Weinstein reduction.

Let ${\mathcal C} = H_3 ({\mathcal A}) \times H_3 (\Gamma (\Sigma, L))$ be the affine 
space on which 
equations $(1)$ and $(2)$ are imposed. 
Then the tangent space at $p = (A, \Psi)$ is 
$ T_p {\mathcal C} = H_3(\Omega^1(\Sigma, i \RR)) \times H_3(\Gamma(\Sigma, L))$

Once again, $(\alpha, \beta)$  in  $H_3$ implies they are actually
$C^1$ on a Riemann surface.

Let $p= (A, \Psi) \in {\mathcal C}$, $X
= ( \alpha_1, \beta)$, $Y= (\alpha_2, \eta)$
$\in T_p {\mathcal C} $  i.e.
 $\alpha_i = \alpha_i^{(0,1)} + \alpha_i^{(1,0)}$ such that 
$\overline{\alpha_i^{(0,1)}}= - \alpha_i^{(1,0)}, i = 1,2.$

On ${\mathcal C}$ one can define a metric
\begin{eqnarray*}
 {\mathcal G} ( X, Y) = \int_\Sigma *\alpha_1 \wedge \alpha_2  + i \int_\Sigma \Re < \beta, \eta>_H \omega\end{eqnarray*}
and an almost complex structure  ${\mathcal I} = \left[
\begin{array}{cc}
* & 0  \\
0 & i 
\end{array} \right] : T_p {\mathcal C} \rightarrow T_p {\mathcal C}$
where   $*: \Omega^{1} \rightarrow \Omega^{1}$ is  the Hodge star
operator on $\Sigma$ such that $*\alpha^{1,0} = - i \alpha^{1,0}$ and $*\alpha^{0,1} = i \alpha^{0,1}$ (i.e. it makes $A^{0,1}$ the holomorphic coordinate
on $ {\mathcal A}$). 

It is easy to check that ${\mathcal G} $ is positive definite. In fact, if  $\alpha_1 = \alpha^{(1,0)} + \alpha^{(0,1)} = a dz - \bar{a} d\bar{z} = i(\stackrel{\cdot}{A_1} dx + \stackrel{\cdot}{A_2 dy}) $ is an 
imaginary valued $1$-form,  $* \alpha_1 = -i (adz + \bar{a} d \bar{z})$ and
$${\mathcal G}(X, X) = \displaystyle  \int_\Sigma 4 |a|^2 dx \wedge dy  +   \int_\Sigma |\beta|^2_H h^2 dx \wedge dy$$ 
where $\omega = \frac{1}{2} h^2 dz \wedge d \bar{z} = -i h^2 dx \wedge dy$.  


We define
\begin{eqnarray*}
\Omega(X, Y) &=& -\frac{1}{2}\int_{\Sigma} \alpha_1 \wedge \alpha_2 + \frac{i}{2} \int_{\Sigma} \Re <
i \beta , \eta>_H \omega\\ 
&=& -\frac{1}{2}\int_{\Sigma} \alpha_1 \wedge \alpha_2 - \frac{1}{4}\int_{\Sigma}( \beta H \bar{\eta} - \bar{\beta} H \eta ) \omega
\end{eqnarray*}
such that $ {\mathcal G} ({\mathcal I} X, Y) = 2 \Omega ( X, Y).$ It is closed, since it is constant. In fact,  $\Omega$ is the real K\"ahler form on the affine space, ~\cite{W} page 93. It is positive, since it comes from a positive 
definite metric.

\begin{lemma}
$\Omega$ is a symplectic form on  the vortex moduli space.
\end{lemma}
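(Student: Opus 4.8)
The plan is to exhibit $\mathcal M$ as a Marsden--Weinstein symplectic quotient of the affine Hilbert manifold $\mathcal C = H_3(\mathcal A)\times H_3(\Gamma(\Sigma,L))$ by the gauge group $G$, and to deduce the statement from the symplectic reduction theorem; since $\Omega$ is (half of) the K\"ahler form on $\mathcal C$, the quotient will in fact inherit a K\"ahler structure, which is more than asked.

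First I would record the relevant structure of the $G$-action. The group $G$, locally ${\rm Maps}(\Sigma,U(1))$, acts on $\mathcal C$ by $g\cdot(A,\Psi)=(A+g^{-1}dg,\,g^{-1}\Psi)$; a direct check shows this action preserves $\mathcal G$ and $\mathcal I$, hence preserves $\Omega$, and that it preserves the analytic subset $Z=\{(A,\Psi):\bar\partial_A\Psi=0\}$ cut out by equation $(2)$. Identifying the Lie algebra of $G$ with ${\rm Maps}(\Sigma,i\RR)$, the fundamental vector field of $\xi$ at $(A,\Psi)$ is $(-d\xi,-\xi\Psi)$; contracting $\Omega$ with it and integrating by parts over $\Sigma$ shows the action is Hamiltonian with moment map $\mu(A,\Psi)=F(A)-\tfrac12(1-|\Psi|_H^2)\omega$ (up to an overall constant), paired with $\xi$ by integration over $\Sigma$. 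Thus $\mu^{-1}(0)$ is exactly the locus of equation $(1)$, $\mathcal S=\mu^{-1}(0)\cap Z$, and $\mathcal M=\mathcal S/G$.

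It then remains to verify the hypotheses of reduction: that $0$ is a regular value of $\mu|_Z$, that $G$ acts freely and properly on $\mathcal S$, and that the quotient is a finite-dimensional smooth manifold. By the Taubes--Bradlow theorem quoted above, a point of $\mathcal S/G$ is an effective divisor of degree $N$ on $\Sigma$; in particular $\Psi$ is never identically zero, so every stabilizer in $G$ is trivial, and by ellipticity of the linearized vortex operator $0$ is a regular value and the action is free and proper. Equivalently, one simply invokes Taubes--Bradlow to identify $\mathcal M$ with the smooth compact manifold ${\rm Sym}^{N}(\Sigma)=\CC P^N$ outright. Granting this, the reduction theorem in the Hilbert setting gives that $\Omega|_{\mathcal S}$ is basic --- its kernel at each point is precisely the tangent space to the $G$-orbit, which is where the moment-map identity and the relation $\mathcal G(\mathcal I X,Y)=2\Omega(X,Y)$ are used --- so it descends to a $2$-form $\Omega_{\mathcal M}$ on $\mathcal M$; closedness of $\Omega_{\mathcal M}$ is automatic since $\Omega$ is constant on the affine space $\mathcal C$, and non-degeneracy follows from that of $\Omega$ together with the moment-map constraint. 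Hence $\Omega_{\mathcal M}$ is a symplectic form on $\mathcal M$.

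The main obstacle is entirely analytic: checking in the infinite-dimensional $H_3$-setting that $\mu^{-1}(0)\cap Z$ is a smooth Hilbert submanifold of $\mathcal C$ and that $G$ acts on it freely and properly, so that $\mathcal M$ is a genuine finite-dimensional smooth manifold. Once smoothness of the quotient is secured --- which for $\Sigma=\CC P^1$ may be taken directly from Taubes--Bradlow --- the descent, closedness and non-degeneracy of $\Omega$ are formal consequences of symplectic reduction.
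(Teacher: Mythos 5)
Your route is the paper's route: realize the moduli space as a Marsden--Weinstein quotient of the constraint set $Z=\mathcal C'=\{\bar\partial_A\Psi=0\}$ by the gauge group, with moment map $\mu(A,\Psi)=\tfrac12\bigl[F(A)-\tfrac12(1-|\Psi|_H^2)\omega\bigr]$ so that equation $(1)$ is $\mu=0$, and then cite the (infinite-dimensional) reduction theorem together with Taubes--Bradlow for smoothness of the quotient. The Hamiltonian computation and the identification of $\mathcal S/G$ with ${\rm Sym}^N(\Sigma)$ are exactly as in the paper.

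There is, however, one concrete gap. You apply reduction to $\mu|_Z$, which presupposes that $(Z,\Omega|_Z)$ is itself a symplectic (sub)manifold; but you justify non-degeneracy only by ``non-degeneracy of $\Omega$ [on $\mathcal C$] together with the moment-map constraint.'' That argument would be correct if you were reducing all of $\mathcal C$ at $\mu^{-1}(0)$ --- but that produces $\mu^{-1}(0)/G$, not the vortex moduli space $(\mu^{-1}(0)\cap Z)/G$. For a general submanifold $Z$ the restriction $\Omega|_Z$ can be degenerate with kernel strictly larger than the orbit directions, so the descended form could fail to be non-degenerate. The missing step, which is the first half of the paper's proof, is to show that $T_pZ$ is stable under the almost complex structure $\mathcal I$: linearizing $\bar\partial_A\Psi=0$ gives $\Psi\,\alpha^{0,1}=-\bar\partial_A\beta$, and since $*\alpha^{0,1}=i\,\alpha^{0,1}$ the pair $(*\alpha,i\beta)$ satisfies the same linearized equation whenever $(\alpha,\beta)$ does. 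Hence for $X=(\alpha,\beta)\in T_pZ$ one may test against $Y=\mathcal I X\in T_pZ$ and use $\mathcal G(\mathcal I X,\mathcal I X)=2\Omega(X,\mathcal I X)=\|\alpha\|^2+\|\beta\|^2$ to conclude $\Omega|_Z$ is non-degenerate (together with the measure-zero argument for the vanishing locus of $\Psi$ and the $C^1$ regularity supplied by the $H_3$ Sobolev setting). With that inserted, the rest of your argument goes through as written.
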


\begin{proof}

Let ${\mathcal C}^{\prime} = \{ (A, \Psi) \in {\mathcal C} | \bar{\partial}_A \Psi = 0 \} $ be the submanifold of ${\mathcal C}$ defined by the second equation.

a)  We first show that  ${\mathcal C}^{\prime}$ is a symplectic submanifold with the same 
symplectic form. 
 By linearizing the equation $\bar{\partial}_A \Psi = 0 $ we get 
$\Psi \alpha_1^{0,1} = -\bar{\partial}_A \beta$
so that if 
$X= (\alpha_1, \beta) \in T_p {\mathcal C}^{\prime}$ then 
$\alpha_1^{0,1} = -\frac{\bar{\partial}_A \beta}{\Psi},$ when $\Psi \neq 0$.  

(Note that:

(i) $\bar{\partial}_{A} \beta $ takes values in $H_3(\Omega^{0,1}(\Sigma)) \otimes H_3(\Gamma (L, \Sigma))$ and hence $\frac{\bar{\partial}_{A} \beta}{ \Psi}$  is a $(0,1)$ form. 

(ii) $\alpha_1$ is $C^1$ since it is in $H_3 (\Omega^1(\Sigma, i \RR))$ and we are on 
a Riemann surface.

(iii) We show that $\Psi$ (which is $C^1$) is zero only on a set of measure zero. 
Suppose $\Psi$ is zero on a set of non-zero measure, say on an open set $U$.
Then, since  $A^{0,1} = \bar{\partial} {\rm log} \Psi$ and is not finite on $U$, $||A||^2 $ will not be finite, a contradiction.)

Suppose that, 
\begin{eqnarray*}
\Omega(X, Y) &=& -\frac{1}{2}\int_{\Sigma} \alpha_1 \wedge \alpha_2 - \frac{1}{4}\int_{\Sigma}( \beta H \bar{\eta} - \bar{\beta} H \eta ) \omega = 0
\end{eqnarray*}
for all $Y = (\alpha_2, \eta) \in T_P {\mathcal C}^{\prime}.$

Then, we show that $X = ( \alpha_1, \beta)\in T_P {\mathcal C}^{\prime} $ is zero.

Recall that $\alpha_1^{0,1} = -\frac{ \bar{\partial}_{A} \beta }{ \Psi}$, a.e. Take $\alpha_2 = * \alpha_1$ and  $\eta = i \beta$.

Note that this  satisfies the linearized equation 
$\Psi \alpha_2^{0,1} = - \bar{\partial}_{A} \eta $ and hence
$(\alpha_2, \eta) \in  T_P {\mathcal C}^{\prime}$.

From the expression of the symplectic form, we get 
$$||\alpha_1||^2   + || \beta ||^2 = 0.$$ 
Therefore, $\alpha_1 =0$ a.e. and $\beta = 0$ a.e.

Since $\alpha_1$ and $\beta$ are $C^1$, therefore, $\alpha_1$ and $\beta$ 
are zero on the complement of a set of measure zero implies that they are 
identically zero.

b) In this part we mostly follow ~\cite{D1} with slight modification. 

Let $\zeta \in \Omega(\Sigma, i{\RR })  $ be the Lie algebra of the
gauge group (the gauge group element being $g = e^{ \zeta}$ ); note that 
$\zeta$ is purely imaginary.  
It generates a vector field $X_{\zeta}$ on ${\mathcal C}^{\prime}$ as follows :
$$X_{\zeta} (A, \Psi) = (d \zeta, -\zeta \Psi) \in T_p
{\mathcal C}^{\prime} $$ where $ p = (A, \Psi) \in {\mathcal C}^{\prime}.$

(Note that $(d \zeta, -\zeta \Psi) \in T_p{\mathcal C}^{\prime} $ follows from the fact that $\bar{\partial}_A \Psi=0$ is invariant under the gauge 
transformation and can be checked easily too).  

We show next that $X_{\zeta}$ is Hamiltonian.
Let us define
$H_{\zeta}: {\mathcal C}^{\prime} \rightarrow {\CC}$ as follows:

$H_{\zeta}(A, \Psi) = \frac{1}{2} [\int_{\Sigma} \zeta \cdot (F(A) - \frac{1}{2} ( 1 - |\Psi|^2_H) \omega] $ 
to be the Hamiltonian for the gauge group action.
Then for $X = (\alpha, \eta) \in T_p {\mathcal C}^{\prime}$,
 \begin{eqnarray*}
 dH_{\zeta} ( X ) & = & \frac{1}{2} \int_{\Sigma} \zeta d \alpha  + \frac{1}{4}\int_M \zeta   
( \Psi H \bar{\eta} + \bar{\Psi} H \eta )  \omega    \\
 &= & -\frac{1}{2} \int_{\Sigma} (d \zeta) \wedge \alpha  -\frac{1}{4}  \int_{\Sigma} [(-\zeta \Psi) H \bar{\eta} - \overline{(-\zeta \Psi)} H \eta] \omega  \\ 
 & = & \Omega ( X_{\zeta},  X ),
 \end{eqnarray*}
where we use that $\bar{\zeta} = - \zeta$.

 Thus we can define the moment map $ \mu : {\mathcal C}^{\prime} \rightarrow
 \Omega^2 ( \Sigma, i{\RR} )= {\mathcal G}^* $ ( the dual of the Lie
 algebra of the gauge group)  to be $$ \mu ( A, \Psi)
 \stackrel{\cdot}{=} \frac{1}{2} [F(A) - \frac{1}{2} ( 1-|\Psi|_{H}^2)  \omega]. $$ Thus equation $(1)$ is $\mu = 0$ and the form descends as a symplectic form
to $\mu^{-1}(0)/ {\cG}$. This follows from Marsden and Weinstein, ~\cite{MW}, who proved the symplectic reduction even for infinite-dimensional case.
 \end{proof}

The moduli space of vortices, $Sym^{N} (\Sigma)$ is a smooth K\"{a}hler manifold with the Manton-Nasir K\"{a}hler form $\omega_{MN},$
(which is equivalent to our metric).

This  metric is given by ( ~\cite{MN}, eqn (2.14)).

This is the same as the descendent of our metric:
\begin{eqnarray*}
 {\mathcal G}(X, X) &=&   \int_\Sigma 4 |a|^2 dx \wedge dy  +  \int_\Sigma |\beta|^2_H h^2 dx \wedge dy \\
&=&    \int_\Sigma (\stackrel{\cdot}{A_1} \stackrel{\cdot}{A_1} + \stackrel{\cdot}{A_2} \stackrel{\cdot}{A_2})  dx \wedge dy  +  \int_\Sigma |\beta|^2_H h^2 dx \wedge dy
\end{eqnarray*}

The complex structure defined by us is exactly the usual complex structure
of the vortex moduli space i.e. on ${\rm Sym}^{N}(\Sigma)$. This can be proved 
using Ruback's argument mentioned in the appendix of  ~\cite{S},
 or in the mathscinet review of ~\cite{MN}. (Recall it takes $\alpha^{0,1}$ to 
$i \alpha^{0,1} $ and $\alpha^{1,0}$ to $-i \alpha^{1,0}$, and $\beta $ to $i \beta$ which is the case in Ruback's argument).

Now we restrict our attention to the case when the Riemann surface is $\CC P^1$
of radius $R$.  The Manton-Nasir K\"{a}hler form
on the moduli space of $N$ vortices on $\CC P^1$
of radius $R$ is given by, ~\cite{MN}, ~\cite{R},
\begin{eqnarray*}
\omega_{MN} = \frac{i}{2} \sum_{r,s=1}^N \left( \frac{4 R^2 \delta_{rs}}{(1+|z_r|^2)^2} + 
2 \frac{\partial \bar{b_s}}{\partial z_r} \right) d z_r \wedge d \bar{z_s}
\end{eqnarray*}
 This  real $2$-form
is exactly the symplectic reduction of $ \Omega$ on the affine space $\cC$ of 
$(A, \Psi) $ to the vortex moduli space, when the number of vortices is $N$ and the Riemann surface is $\CC P^1$ of radius $R$. 
Thus $\Omega =   \omega_{MN}.$  (This statement is true even for a compact Riemann surface of genus $g >0.$)

We study this form on the moduli space of $N$ vortices on the sphere of radius 
$R$. In the following we refer to Romao, ~\cite{R}, where  $\omega_{MN}$ is denoted by $\omega_{sam}.$
If $ \frac{1}{2 \pi} \Omega$ is integral, 
then $  \frac{1}{2 \pi} [\Omega]   = \ell [\omega_{FS}],$ for some $\ell \in \ZZ$
since $H^2 (\CC P^N, \ZZ)$ is generated by $[\omega_{FS}]$ (the cohomology class of the Fubini-Study K\"{a}hler form.) 

By ~\cite{R}, we can fine tune the volume of the sphere such that $\ell=1$.
Namely,  we  take  $R^2 = \frac{1}{2} + N$ and 
$\kappa = 2k$, $k \in \ZZ$. This satisfies the constraints that $\kappa$ and 
$ \kappa (R^2 -N)$ is an integer.  

Then $[\frac{1}{2 \pi} \Omega] = [\frac{1}{2 \pi} \omega_{MN}] = [ \omega_{FS}]$.
(One can also derive  the  formula $\frac{1}{2 \pi} [\omega_{MN}] = 2 (R^2 - N) [\omega_{FS}]$ from Manton and Nasir, ~\cite{MN} ((3.20), pg 600) ). 

 Note that the
Bradlow criterion for existence of solution holds for this choice of $R^2$, since area of the sphere is 
greater than $4 \pi N,$ ~\cite{R} equation (18), ~\cite{B}.

\subsection{Quillen Determinant Line bundle and the hyperplane bundle on $\CC P^N$} 

We denote the Quillen bundle ${\mathcal P} = {\rm det} (\bar{\partial}_{A})$ which is well defined on 
$\cC={\mathcal A} \times \Gamma (L)$ (over every $(A, \Psi)$ the fiber is that of ${\rm det} (\bar{\partial} + A^{0,1})$). 
In ~\cite{D}, following Biswas and Raghavendra's work on stable triples, ~\cite{BR}, we had given ${\mathcal P}$ a modified Quillen metric, namely, we multiply the Quillen metric $e^{-\zeta_A^{\prime}(0)}$ by the factor  $e^{- \frac{i}{4 \pi} \int_{M} | \Psi|^2_H \omega} ,$ where recall $\zeta_A(s)$ is the zeta-function 
corresponding to the Laplacian of the $\bar{\partial} + A^{0,1}$ operator. Here we calculate the curvature for this modified metric on the affine space since the details are not given in ~\cite{D}.
 The factor $e^{-\zeta_A^{\prime}(0)}$ contributes $\displaystyle\frac{i}{2\pi} \left( -\frac{1}{2}\int_{\Sigma} \alpha_1 \wedge \alpha_2\right)$ to the curvature, 
and the factor $e^{- \frac{i}{4 \pi} \int_{M} | \Psi|^2_H \omega}$ contributes
$\displaystyle\frac{i}{2\pi} \left(- \frac{1}{4}\int_{\Sigma}( \beta H \bar{\eta} - \bar{\beta} H \eta ) \omega\right)$  to the curvature.

\begin{lemma}
The curvature $\Omega_{det}$ of ${\mathcal P}$ with the modified Quillen metric is indeed $\frac{i}{ 2 \pi} \Omega$ on the affine space $\cC$. $\Omega$ is the real $(1,1)$ given above   which is positive.
\end{lemma}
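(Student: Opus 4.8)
The plan is to split the modified Quillen metric on $\mathcal P=\det(\dbar_A)$ as the product of the genuine Quillen metric $e^{-\zeta'_A(0)}$ (together with the induced $L^2$–metric on the determinant line) and the positive function $e^{-\varphi}$, where $\varphi(A,\Psi)=\tfrac{i}{4\pi}\int_M|\Psi|^2_H\,\omega$ (this is real, since $\omega$ is purely imaginary). Since the curvature of a Hermitian holomorphic line bundle is additive under multiplying the metric by $e^{-\varphi}$ — it changes by the $\partial\bar\partial$–contribution of $\varphi$ — we will obtain $\Omega_{det}=\Omega_Q+\frac{i}{2\pi}\bigl(-\tfrac14\int_\Sigma(\beta H\bar\eta-\bar\beta H\eta)\,\omega\bigr)$ once that contribution is identified, where $\Omega_Q$ denotes the curvature of the unmodified Quillen bundle.

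For the first term, invoke Quillen's curvature theorem from \cite{Quillen} applied to the holomorphic family of Cauchy--Riemann operators $\dbar+A^{0,1}$ on $\Sigma$, parametrized by $A^{0,1}\in\mathcal A$; note that the almost complex structure $\mathcal I$ fixed above is chosen precisely so that $A^{0,1}$ is the holomorphic coordinate on $\mathcal A$, so Quillen's hypotheses apply. With the present normalizations of the Hodge star and of $\Omega$, Quillen's formula gives $\Omega_Q=\frac{i}{2\pi}\bigl(-\tfrac12\int_\Sigma\alpha_1\wedge\alpha_2\bigr)$. Because $\dbar_A$ does not involve $\Psi$, $\Omega_Q$ is pulled back from $\mathcal A$ and has no component in the $\Gamma(L)$ directions; in particular it already matches $\Omega$ in the connection variables.

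For the second term, compute the $\partial\bar\partial$–contribution of $\varphi$, i.e.\ of $\tfrac{i}{4\pi}\int_M\Psi H\bar\Psi\,\omega$, on $\mathcal C=\mathcal A\times\Gamma(L)$, using that $\Psi$ and $\bar\Psi$ are the holomorphic and antiholomorphic fibre coordinates. Only the mixed $\Psi\bar\Psi$ cross-terms survive, and evaluating the resulting $(1,1)$–form on $X=(\alpha_1,\beta)$, $Y=(\alpha_2,\eta)\in T_p\mathcal C$ yields exactly $\frac{i}{2\pi}\bigl(-\tfrac14\int_\Sigma(\beta H\bar\eta-\bar\beta H\eta)\,\omega\bigr)$. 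Adding this to $\Omega_Q$ and comparing with the displayed formula defining $\Omega$ gives $\Omega_{det}=\frac{i}{2\pi}\Omega$ on $\cC$. Positivity of $\Omega$ has already been recorded above, since $\Omega$ is the Kähler form of the positive definite metric $\mathcal G$ via $\mathcal G(\mathcal I X,Y)=2\Omega(X,Y)$.

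I expect the main difficulty to be bookkeeping rather than conceptual: matching the constants, the factor of $i$, and the sign convention in Quillen's curvature formula (which is tied to the choice that $A^{0,1}$ is the holomorphic coordinate) with the conventions fixed earlier in this section, and carrying the $\partial\bar\partial$–computation of the $|\Psi|^2_H$ term through with the correct factor of $\tfrac14$ and sign. A secondary, purely technical point is the usual one in Quillen's construction — the possible jumping of $\dim\ker(\dbar_A)$ — which is handled exactly as in \cite{Quillen} by working with the determinant line itself rather than with the number $\det(\dbar_A)$, and hence does not affect the curvature identity.
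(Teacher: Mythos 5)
Your proposal is correct and follows essentially the same route as the paper's own argument: both split the modified metric into the genuine Quillen metric (whose curvature contribution $\frac{i}{2\pi}\bigl(-\tfrac12\int_\Sigma\alpha_1\wedge\alpha_2\bigr)$ comes from Quillen's formula, using that $A^{0,1}$ is the holomorphic coordinate) and the conformal factor $e^{-\frac{i}{4\pi}\int_M|\Psi|^2_H\omega}$, whose $\partial\bar\partial$-contribution yields the second term $\frac{i}{2\pi}\bigl(-\tfrac14\int_\Sigma(\beta H\bar\eta-\bar\beta H\eta)\,\omega\bigr)$. The signs, constants, and the positivity argument via $\mathcal G(\mathcal I X,Y)=2\Omega(X,Y)$ all match the paper.
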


\begin{proof}
Quillen, ~\cite{Quillen}, constructs the deteminant line bundle on the affine space ${\mathcal A}^{0,1}$. 

We consider the space of unitary connections, i.e $\overline{A^{0,1}} = - A^{1,0}$.
The complex structure on ${\mathcal C}$ defined by 

  ${\mathcal I} = \left[
\begin{array}{cc}
* & 0  \\
0 & i 
\end{array} \right] : T_p {\mathcal C} \rightarrow T_p {\mathcal C}$ makes 
$(A^{0,1}, \Psi)$ the holomorphic variables on ${\mathcal C}.$
This is because $* \alpha^{0,1} = i \alpha^{0,1}.$

Thus the holomorhic coordinate $w$ in ~\cite{Quillen} corresponds to 
 $A^{0,1}$ in our notation.

By Quillen's computation, the curvature two form is 
$\bar{\partial} \partial log || \sigma ||^2 $ where  
$ log || \sigma ||^2 = -\zeta_{A}^{\prime}(0)$ where $\zeta_{A}$ is the 
zeta-function corresponding to the Laplacian of $\bar{\partial}_{A}.$

We first note that $\bar{\partial} {\partial}  log || \sigma ||^2 =  {\partial}\bar{\partial} [- log || \sigma ||^2] $ since the order of the derivatives is switched.
The Quillen curvature form is $\bar{\partial} {\partial}  log || \sigma ||^2 =  {\partial}\bar{\partial} [- log || \sigma ||^2]  = {\partial}\bar{\partial} \zeta_{A}^{\prime}(0)$. 
Now
$$\frac{ \partial^2 \zeta_{A}^{\prime}(0)}{\partial w \partial \bar{w}}dw \wedge d \bar{w}  = \frac{i}{2 \pi} \int_{\Sigma} \overline{\delta A^{0,1}} \wedge \delta A^{0,1}, $$
that is,
\begin{eqnarray*}
(\frac{ \partial^2 \zeta^{\prime}(0)}{\partial w \partial \bar{w}} dw \wedge d \bar{w}) (\alpha_1^{0,1}, \alpha_2^{0,1})   &=& \frac{i}{2 \pi} \int_{\Sigma} \frac{1}{2}[\overline{\alpha_1^{0,1}} \wedge \alpha_2^{0,1} - \overline{\alpha_2^{0,1}} \wedge \alpha_1^{0,1}] \\
&=&\frac{i}{2 \pi} \int_{\Sigma} \frac{-1}{2}[\alpha_1 \wedge \alpha_2] 
\end{eqnarray*}
where recall we have used the fact that $\overline{\alpha_i^{0,1}} = - \alpha_i^{1,0}$  This  precisely corresponds to  $i / 2 \pi$ times the first term
in our K\"{a}hler form.

Next, the term 
 $e^{- \frac{i}{4 \pi} \int_{M} | \Psi|^2_H \omega} $ contributes to the curvature as:

\begin{eqnarray*}
- \frac{i}{4 \pi}\tau = - \frac{i}{4 \pi}\int_{\Sigma} <\delta \Psi \wedge \delta \bar{\Psi}>_H \omega
\end{eqnarray*}
where $- \frac{i}{4 \pi}\tau$ is a two-form on the affine space $\Gamma(\Sigma, L)$. 
Details of this calculation can be found in ~\cite{D}.
Here $\tau (\beta, \eta) = \frac{1}{2}\int_{\Sigma} (\beta H \bar{\eta} - \bar{\eta} H \beta) \omega$. 

Thus the contribution is $$- \frac{i}{4 \pi} \tau (\beta, \eta) = \frac{i}{2 \pi} \left[-\frac{1}{4}\int_{M} (\beta H \bar{\eta} - \bar{\eta} H \beta) \omega\right],$$
which is precisely corresponds to $i/2 \pi$ times 
the second term in our K\"{a}hler form.
\end{proof}

To define it on ${\cC }/ {\cG},$ we repeat the argument in ~\cite{D1}, ~\cite{D} with $B=0$. 

\begin{lemma} ${\mathcal P}$ descends to a well defined line bundle (denoted by the same symbol) on  ${\cC} / {\cG}$.
\end{lemma}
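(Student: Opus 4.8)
The plan is to show that the Quillen determinant line bundle $\cP = \det(\dbar_A)$ on the affine space $\cC = \caA\times\Gamma(\Sigma,L)$ carries a natural $\cG$-equivariant structure, and then to exhibit the descended bundle as the associated bundle $\cP/\cG$ over $\cC'/\cG \supset \mu^{-1}(0)/\cG = \cM$. The key point is that the gauge group $\cG$ acts on $\cC$, and this action lifts canonically to the total space of $\cP$: since for a gauge transformation $g$ one has $\dbar_{A_2} = g^{-1}\circ\dbar_{A_1}\circ g$ when $A_2 = g^{-1}dg + A_1$, multiplication by $g$ furnishes an isomorphism of Fredholm complexes and hence a linear isomorphism of determinant lines $\det(\dbar_{A_1})\to\det(\dbar_{A_2})$; these assemble into a lift of the $\cG$-action to $\cP$ covering the action on the base. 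Because $\cG$ is abelian (locally $\mathrm{Maps}(\Sigma,U(1))$), the usual potential ambiguity in lifting a group action to a determinant line — a possible central extension/anomaly — must be checked to vanish or be absorbed; I would verify this directly from the explicit cocycle, exactly as done for $B=0$ in \cite{D1}, \cite{D}.

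First I would recall from \cite{D1}, \cite{D} the construction in the case $B\neq 0$ and specialize all formulas to $B=0$, so that the modified Quillen metric reduces to $e^{-\zeta_A'(0)}\cdot e^{-\frac{i}{4\pi}\int_M |\Psi|^2_H\,\omega}$. The second factor is manifestly gauge-invariant because $|\Psi|^2_H$ is a gauge-invariant function and $\omega$ is fixed; the first factor is gauge-invariant because the spectrum of the Laplacian of $\dbar_A$ depends only on the gauge-equivalence class of $A$. Hence the lifted $\cG$-action on $\cP$ is by \emph{isometries} for the modified Quillen metric. Next I would note that $\cG$ acts freely on $\cC'$ away from the reducible locus (or on the full moduli space $\cM = \mu^{-1}(0)/\cG$, after quotienting by the constant gauge transformations $U(1)$, which act trivially on $\cM$ and by a fixed scalar on the fibres), and that $\cC'\to\cC'/\cG$ is a principal $\cG$-bundle in the appropriate Hilbert-manifold sense. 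Then $\cP/\cG \to \cC'/\cG$ is a well-defined line bundle: local trivializations downstairs are obtained from local slices of the $\cG$-action upstairs, and transition functions are $\cG$-invariant by construction.

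Finally I would observe that the modified Quillen metric, being $\cG$-invariant, descends to a Hermitian metric on $\cP/\cG$, and by the curvature Lemma above its curvature descends to $\frac{i}{2\pi}\Omega$, which we have already identified with the Manton--Nasir form $\omega_{MN}$ on $\mathrm{Sym}^N(\Sigma)$; this will be used in the next subsection to identify $\cP/\cG$ with $\cO(1)$ on $\CC P^N$. The main obstacle is the one genuinely delicate point: verifying that the lift of the (abelian) gauge action to the determinant line is free of anomaly, i.e. that it really is a group action and not merely a projective one. I expect this to follow because the determinant-line cocycle for a family related by an \emph{invertible} bundle automorphism $g$ is trivial — the isomorphisms $\det(\dbar_{A})\to\det(\dbar_{g\cdot A})$ compose strictly — so no central extension appears; the remaining bookkeeping (that constant gauge transformations act as overall scalars, harmlessly, since they scale the metric by a unit modulus factor) is routine and parallels \cite{D1}, \cite{D} verbatim with $B=0$.
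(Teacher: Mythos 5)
Your proposal is correct and follows essentially the same route as the paper: both rest on the observation that a gauge transformation conjugates the Cauchy--Riemann operator ($D_g = gDg^{-1}$), hence induces canonical, strictly composing isomorphisms of the determinant lines, which together with gauge invariance of the spectrum gives descent to $\cC/\cG$. The only difference is one of emphasis --- the paper makes your ``these assemble into a lift'' step explicit by working in Quillen's local model (the eigenspace sums $K^a(\Delta)$ over the open sets $U^a$, and the check that $U^a = U^a_g$), whereas you instead spell out the absence of a cocycle anomaly and the freeness of the action, points the paper leaves implicit.
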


\begin{proof}
Let $D =  \bar{\partial} + A^{(0,1)}$ and let $D_g = g (\bar{\partial} + A^{(0,1)})g^{-1} $ (the gauge transformed operator), 
 then the gauge transformed Laplacian of $D$, namely $\Delta_g =  g \Delta g^{-1}.$ Thus there is an  isomorphism of eigenspaces, namely, 
 $s \rightarrow g s.$

Let $K^a(\Delta)$ be the direct sum of 
eigenspaces of the operator $\Delta$ of 
eigenvalues $< a$, over the open subset 
$U^a = \{ A^{(0,1)} | a \notin {\rm Spec} \Delta \}$ of the affine space 
${\cC}.$ The determinant line bundle is defined using the exact sequence
$$ 0 \rightarrow {\rm Ker} D \rightarrow K^a(\Delta) \rightarrow 
D(K^a(\Delta)) \rightarrow {\rm Coker} D \rightarrow 0$$ 
Thus 
one identifies 
$$\wedge^{{\rm top} }({\rm Ker} D)^* \otimes \wedge^{{\rm top} }
({\rm Coker} D)$$ with 
 $\wedge^{{\rm top}}(K^a(\Delta))^* \otimes \wedge^{{\rm top}} 
(D(K^a(\Delta)))$  (see ~\cite{Quillen},  for more details) and 
there is an isomorphism of the fibers as $D \rightarrow D_g$. 
Therefore one can identify 
$$ \wedge^{{\rm top}}(K^a(\Delta))^* \otimes \wedge^{{\rm top}} 
(D(K^{a}(\Delta))) \equiv
\wedge^{{\rm top}}(K^a(\Delta_g))^* \otimes \wedge^{{\rm top}} 
(D(K^{a}(\Delta_g))).$$

Let $U^a_g = g \cdot U^a$ where $U^a_g$ is the open set formed out of gauge 
transformation of $U^a, $ namely, $U^a_g = \{ A_g,  | a \notin Spec(g \Delta g^{-1}) \}.$ But $A_g \in U^a_g$  imples $a \notin Spec( \Delta)$ and thus   $A_g \in U^a$. Hence $U^a \subset U^a_g$. Similarly,  $U_g^a \subset U^a.$
Thus $U^a = U^a_g$.

On $U^a$ one defines the equivalence class of the fiber 
$$ \wedge^{{\rm top}}(K^a(\Delta))^* \otimes \wedge^{{\rm top}} 
(D(K^{a}(\Delta))) \equiv
\wedge^{{\rm top}}(K^a(\Delta_g))^* \otimes \wedge^{{\rm top}} 
(D(K^{a}(\Delta_g))).$$
\end{proof}

When the form $\frac{1}{2 \pi}\Omega$ is integral, this construction holds for 
a compact Riemann surface of genus $g$.
Thus ${\mathcal P}$  descends to the quotient of the affine space $\cC$ modulo the gauge group and further
restricts to a line bundle on the vortex moduli space. We consider this in the   case when the Riemann surface is $\CC P^1$ of radius $R$ where $R^2 = \frac{1}{2} + N$. We denote this line bundle again by 
${\mathcal P}$ and its curvature by $ \Omega_{det}= \frac{i}{2 \pi}\Omega$ such that 
 $\Omega$ coincides with the
 usual K \"{a}hler  form on the vortex moduli space, namely $ \omega_{MN}$, which is cohomologous to $ 2 \pi \omega_{FS}$. The Chern class of ${\mathcal P}$ is 
$\frac{1}{2 \pi} [\Omega]$ which is integral and $\Omega$ is positive.

The first Chern class of ${\mathcal P}$ , namely $ \frac{1}{2 \pi} [\Omega]$,  is equal to 
the first Chern-class of the hyperplane line bundle
on $\CC P^N$, namely $[\omega_{FS}]$.

Since the Picard variety of 
$\CC P^N $ is trivial, the hyperplane bundle is uniquely defined as a 
holomorphic line bundle by its first Chern class and so is the determinant 
bundle. Since the first Chern classes agree, they are the equivalent
 as holomorphic line bundles.

Thus we have the following proposition (where we denote  ${\mathcal P}= \det(\dbar)$).

\begin{proposition}\label{prop:proj}
The determinant line bundle ${\mathcal P}= \det(\dbar)$ of the family of Cauchy-Riemann operators on $\CC P^1$, as a line bundle on $\CC P^N$  is equivalent, as holomorphic line 
bundles, to the hyperplane line bundle  $\cO(1)$ on $\CC P^N$, if the radius $R$ of $\CC P^1$ satisfies $R^2 = \frac{1}{2} + N$.
\end{proposition}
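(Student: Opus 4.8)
The plan is to assemble the proposition directly from the lemmas already established in \S\ref{sect:vortex}, so almost no new work is required — the content has been front-loaded into the preceding results. First I would recall the setup: over the affine space $\cC = \cA \times \Gamma(L)$ one has the Quillen bundle ${\mathcal P} = \det(\dbar_A)$ equipped with the modified Quillen metric, and by the first of the curvature lemmas its curvature on $\cC$ equals $\tfrac{i}{2\pi}\Omega$, where $\Omega$ is the positive real $(1,1)$-form written above. Next I would invoke the descent lemma (${\mathcal P}$ descends to $\cC/\cG$ and restricts to a genuine holomorphic line bundle on the vortex moduli space) together with the identification of the moduli space, via Taubes--Bradlow, with $\operatorname{Sym}^N(\CC P^1) = \CC P^N$ when $\Sigma = \CC P^1$.

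Then comes the cohomological bookkeeping. With the specific choice $R^2 = \tfrac12 + N$ one has, by the Ro\-mao computation cited above, $[\tfrac{1}{2\pi}\Omega] = [\tfrac{1}{2\pi}\omega_{MN}] = [\omega_{FS}]$, i.e. $\ell = 1$; I would emphasize that this is exactly the normalization that makes the constraints on $\kappa$ and $\kappa(R^2 - N)$ integral and also keeps the Bradlow area bound satisfied. Hence the first Chern class of ${\mathcal P}$ on $\CC P^N$ equals $[\omega_{FS}]$, which is the first Chern class of the hyperplane bundle $\cO(1)$.

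Finally I would close the argument with the uniqueness statement for holomorphic line bundles on $\CC P^N$: since $H^1(\CC P^N, \cO) = 0$, the Picard group injects into $H^2(\CC P^N,\ZZ) \cong \ZZ$ via the first Chern class, so a holomorphic line bundle on $\CC P^N$ is determined up to holomorphic isomorphism by $c_1$. Since ${\mathcal P}$ and $\cO(1)$ have the same first Chern class, and ${\mathcal P}$ is genuinely holomorphic (being a Quillen determinant bundle), they are isomorphic as holomorphic line bundles, which is the assertion of Proposition~\ref{prop:proj}.

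The only real subtlety — the "hard part" — is not in the cohomology, which is routine, but in making sure that ${\mathcal P}$ really is a \emph{holomorphic} line bundle on $\CC P^N$ and not merely a smooth one with the right Chern class: one must check that Quillen's holomorphic structure on $\det(\dbar)$ over the affine space $\cA^{0,1}$ is compatible with the complex structure $\cI$ chosen on $\cC$ (this is the point of identifying $A^{0,1}$ with Quillen's holomorphic coordinate $w$) and that this holomorphic structure survives the descent through $\cC/\cG$ and the restriction to $\operatorname{Sym}^N(\CC P^1)$. Once this holomorphicity is in hand, the identification with $\cO(1)$ is immediate from the triviality of the Picard variety of projective space.
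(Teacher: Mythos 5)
Your proposal is correct and follows essentially the same route as the paper: curvature of the modified Quillen metric equals $\tfrac{i}{2\pi}\Omega$, descent to the vortex moduli space $\operatorname{Sym}^N(\CC P^1)=\CC P^N$, the normalization $R^2=\tfrac12+N$ forcing $[\tfrac{1}{2\pi}\Omega]=[\omega_{FS}]$, and then triviality of the Picard variety of $\CC P^N$ to upgrade equality of first Chern classes to a holomorphic isomorphism with $\cO(1)$. Your closing remark on verifying that Quillen's holomorphic structure is compatible with the complex structure $\cI$ (via identifying $A^{0,1}$ with Quillen's holomorphic coordinate $w$) is exactly the point the paper makes in its curvature lemma, so nothing is missing.
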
 

\section{Kodaira embedding theorem and application}\label{sect:kodaira}

Here we recall the Kodaira embedding theorem (cf. \cite{GH,Wells}) in a form that is suitable for our application.

\begin{theorem}[Kodaira embedding theorem]\label{thm:kodaira}
Let $M$ be a compact K\"ahler manifold with integral K\"ahler form $\omega$. Let $\cL\to M$ be the line bundle 
with connection $\nabla$ and curvature $\omega$. Then there is a positive integer $k_0$ (which we will assume is minimal)
such that for all $k\ge k_0$
the natural map
$$
\phi_k : M \hookrightarrow \PP(H^0(M, \cO(\cL^{\otimes k}))^*)
$$
is a holomorphic embedding. In particular, one has the equality of cohomology classes 
$[\phi_k^*(\Omega_k)] = k[\omega]$, where $\Omega_k$ denotes the K\"ahler form of the 
Fubini-Study metric on $\PP(H^0(M, \cO(\cL^{\otimes k}))^*)$.  In fact, $\phi_k^*({\mathcal O}(1)) \cong \cL^{\otimes k}$
as holomorphic line bundles on $M$, where ${\mathcal O}(1)$  is the hyperplane bundle $\PP(H^0(M, \cO(\cL^{\otimes k}))^*)$.
\end{theorem}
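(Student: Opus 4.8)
The plan is to treat the Kodaira embedding theorem as a known external result, so the "proof" here is really an exercise in organizing and citing, plus verifying that the statement as written matches the references. First I would recall the classical statement: if $M$ is a compact complex manifold carrying a positive holomorphic line bundle $\cL\to M$ (positivity meaning $\cL$ admits a hermitian metric whose Chern curvature form is a positive $(1,1)$-form, which in our setting is exactly the condition that the curvature equals the integral K\"ahler form $\omega$), then for $k$ sufficiently large the sections of $\cL^{\otimes k}$ are basepoint-free and separate points and tangent vectors, so that the Kodaira map $\phi_k$ is a well-defined holomorphic embedding. I would cite \cite{GH} (Chapter~1, \S4) and \cite{Wells} for the full argument, which goes via the Kodaira vanishing theorem and a blow-up/vanishing argument to produce enough sections; there is no need to reproduce it.

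Next I would address the two cohomological and line-bundle assertions in the final two sentences, since these are the parts that actually need a short verification rather than a blanket citation. For the isomorphism $\phi_k^*(\cO(1))\cong\cL^{\otimes k}$: by construction the Kodaira map is defined, in a local frame, by $z\mapsto [s_0(z):\cdots:s_m(z)]$ for a basis $s_0,\dots,s_m$ of $H^0(M,\cO(\cL^{\otimes k}))$, and the tautological identification of this data is precisely an isomorphism of $\cL^{\otimes k}$ with the pullback of the hyperplane bundle $\cO(1)$ on $\PP(H^0(M,\cO(\cL^{\otimes k}))^*)$ — this is the standard description of a map to projective space determined by a very ample line bundle together with a choice of basis of its sections. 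Taking first Chern classes of this isomorphism, and using that $c_1(\cO(1))=[\Omega_k]$ with $\Omega_k$ the Fubini--Study form (normalized so that it represents the positive generator of $H^2$), gives $[\phi_k^*(\Omega_k)]=c_1(\cL^{\otimes k})=k\,c_1(\cL)=k[\omega]$, which is the claimed equality of cohomology classes. So both remaining assertions follow formally from the very ampleness of $\cL^{\otimes k}$ once the hard analytic input — that such $k$ exists — has been granted.

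The only genuine subtlety, and the step I would flag as the main point to be careful about, is the hypothesis identifying "integral K\"ahler form $\omega$" with "curvature of $\cL$" and hence with positivity of $\cL$: one must know that an integral K\"ahler class is represented by the curvature of some holomorphic line bundle (this is the geometric-quantization prequantization statement invoked in the introduction of the paper, and it uses integrality of $[\omega]$ to produce $\cL$ with $c_1(\cL)=[\omega]$, then a metric whose curvature is the given form $\omega$), and that a line bundle with positive curvature form is positive in the sense required by Kodaira. Granting this, the existence of the minimal $k_0$ is exactly Kodaira's theorem applied to $\cL$ (equivalently, to each $\cL^{\otimes k}$, noting that if $\cL^{\otimes k_0}$ is very ample then so is $\cL^{\otimes k}$ for all $k\ge k_0$, since the tensor product of a very ample bundle with a globally generated one is very ample and $\cL^{\otimes(k-k_0)}$ is globally generated for $k\ge k_0$). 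Thus the whole statement reduces to citing \cite{GH,Wells} for Kodaira's embedding theorem together with these elementary monotonicity and naturality observations, with no new calculation required.
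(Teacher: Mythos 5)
Your proposal is correct and, like the paper, treats the hard analytic content (the existence of $k_0$ such that $\cL^{\otimes k}$ is very ample for $k\ge k_0$) as a citation to \cite{GH,Wells}; the substantive difference lies in how the final two assertions are derived, and there your logical order is the sounder one. The paper first establishes the cohomological identity $[\phi_k^*(\Omega_k)]=k[\omega]$ (via the $L^2$-inner product on $H^0(M,\cO(\cL^{\otimes k}))$ inducing the Fubini--Study form, which represents $c_1(\cO(1))$) and then concludes ``therefore $\phi_k^*(\cO(1))\cong\cL^{\otimes k}$ as holomorphic line bundles.'' As literally stated, that deduction is incomplete on a general compact K\"ahler manifold: equality of first Chern classes determines a holomorphic line bundle only up to an element of $\mathrm{Pic}^0(M)$, which need not vanish (e.g.\ for a compact Riemann surface of positive genus, precisely the example the paper cites with $k_0=3$). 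You instead obtain $\phi_k^*(\cO(1))\cong\cL^{\otimes k}$ directly from the tautological description of the map to projective space defined by a basis of sections of a very ample bundle, and then read off $[\phi_k^*(\Omega_k)]=k[\omega]$ by taking first Chern classes; this is the standard argument and it closes the gap. Your two flagged side points --- that integrality of $[\omega]$ is what produces $\cL$ with curvature $\omega$ (the prequantization step), and that very ampleness of $\cL^{\otimes k_0}$ propagates to all $k\ge k_0$ because the tensor product of a very ample bundle with a globally generated one is very ample --- are both correct and are left implicit in the paper.
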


Here $H^0(M, \cO(\cL^{\otimes k}))$ denotes the finite dimensional vector space of all holomorphic sections of 
the line bundle $\cL^{\otimes k}$, which has a natural $L^2$-metric using the K\"ahler structure on $M$, which in turn induces the Fubini-Study K\"ahler form $\Omega_k$ on $\PP(H^0(M, \cO(\cL^{\otimes k}))^*)$.


For instance, for a compact Riemann surface of genus greater than 2, the minimal choice of $k_0$ is equal to $3$. 

\begin{proof}
Here we outline a few more details re the last half of the Kodaira embedding theorem above, which may not appear 
in the standard versions of the theorem in reference
books (cf. \cite{GH,Wells}). The vector space of all holomorphic sections $H^0(M, \cO(\cL^{\otimes k})$ of $\cL^{\otimes k}$ has 
an inner product induced by the K\"ahler volume form, 
$$
\langle \sigma_1, \sigma_2\rangle = \int_M h^{\otimes k} (\sigma_1(x), \sigma_2(x)) \frac{\omega^n}{n!}
$$
where $n= \dim_\CC(M)$ and $h^{\otimes k} $ is the induced hermitian inner product on $\cL^{\otimes k}$.
This in turn determines a Fubini-Study K\"ahler form $\Omega_k$ on $\PP(H^0(M, \cO(\cL^{\otimes k}))^*)$, 
which is the first Chern class of the hyperplane bundle ${\mathcal O}(1)$, satisfying $[\phi_k^*(\Omega_k)] = k[\omega]
\in H^{1,1}(M)\bigcap H^2(M;\ZZ)$. 
Therefore $\phi_k^*({\mathcal O}(1)) \cong \cL^{\otimes k}$
as holomorphic line bundles on $M$.
\end{proof}

\begin{proof}[Proof of Theorem \ref{thm:main}]
By Proposition \ref{prop:proj}, we know that Theorem \ref{thm:main} is true for complex projective space 
$\CC P^N$ for all positive integers $N$. By the Kodaira embedding theorem, Theorem \ref{thm:kodaira}, 
and with the $N = \dim H^0(M, \cO(\cL^{\otimes k_0} )) - 1$, Theorem \ref{thm:main} is true in the general case.
\end{proof}


\section{Symplectic variant and application}\label{sect:gromov}

The symplectic variant of the Kodaira embedding theorem is due to Gromov and we quote a version in 
\cite{Tischler}, Remark following Theorem B.

\begin{theorem}[Gromov's embedding theorem]\label{thm:gromov}
Let $M$ be a compact symplectic manifold with integral symplectic form $\omega$. Let $\cL\to M$ be the line bundle 
with connection $\nabla$ and curvature $\omega$. Then there is a positive integer $k_0$ (which we will assume is minimal)
such that for all $k\ge k_0$
there is a symplectic embedding
$$
\phi_k : M \hookrightarrow \CC P^k.
$$
In particular, $\phi_k^*(\Omega_k) = \omega$, where $\Omega_k$ denotes the symplectic form of the 
Fubini-Study metric on $ \CC P^k$.
\end{theorem}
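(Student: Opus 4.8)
The plan is to prove Theorem~\ref{thm:gromov} along the lines of Tischler, so that it plays in the symplectic category the role that the Kodaira embedding theorem (Theorem~\ref{thm:kodaira}) plays above; the only input from the hypotheses is that $[\omega]$ lies in the image of $H^{2}(M;\ZZ)\to H^{2}(M;\RR)$. First I would produce the analogue of the line bundle $\cL$: since $[\omega]$ is integral there is a Hermitian line bundle $(\cL,h)\to M$ with $c_{1}(\cL)=[\omega]$ and a unitary connection $\nabla$ whose curvature is $-2\pi i\,\omega$. Because $M$ is compact, $\cL$ is globally generated by finitely many sections $s_{0},\dots,s_{N_{0}}$, and then
$$f:=[\,s_{0}:\cdots:s_{N_{0}}\,]\;:\;M\longrightarrow\CC P^{N_{0}}$$
is a smooth map with $f^{*}\cO(1)\cong\cL$, hence $f^{*}[\omega_{FS}]=[\omega]$, that is $f^{*}\omega_{FS}=\omega+d\beta$ for some $\beta\in\Omega^{1}(M)$. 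At this stage one has the right cohomology class but neither an embedding nor the equality $f^{*}\omega_{FS}=\omega$.

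The second, and main, step upgrades $f$ to an honest symplectic embedding by enlarging the target. Fix a smooth embedding $e:M\hookrightarrow\CC^{N_{1}}\subset\CC P^{N_{1}}$ into an affine chart and rescale it by a small $t>0$; on the affine chart $\omega_{FS}$ is exact, $\omega_{FS}|_{\CC^{N_{1}}}=d\lambda$ with $\lambda$ vanishing to first order at the origin, so $(te)^{*}\omega_{FS}=d((te)^{*}\lambda)$ is exact and $O(t^{2})$ in $C^{0}$ on the compact $M$. Next choose a map $g:M\to\CC^{N_{2}}\subset\CC P^{N_{2}}$ into a large affine chart arranged so that $g^{*}\omega_{FS}$ equals the prescribed exact $2$-form $-d\beta-(te)^{*}\omega_{FS}$; this is possible because on the chart the pulled-back Fubini--Study form is $d$ of a small perturbation of $-\tfrac{1}{2\pi}\sum_{j}(u_{j}\,dv_{j}-v_{j}\,du_{j})$, and, after rescaling, every $1$-form on $M$ can be written through finitely many pairs of functions $(u_{j},v_{j})$ in this way, the higher-order discrepancy coming from $\log(1+|w|^{2})\neq|w|^{2}$ being removed by an iteration/implicit-function argument. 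Now compose with the iterated Segre embedding $\CC P^{N_{0}}\times\CC P^{N_{1}}\times\CC P^{N_{2}}\hookrightarrow\CC P^{N}$, which pulls $\cO(1)$ back to $\cO(1,1,1)$ and hence $\omega_{FS}$ back to the sum of the three Fubini--Study forms; the map $\Phi:=\mathrm{Segre}\circ(f,g,te):M\to\CC P^{N}$ then satisfies $\Phi^{*}\omega_{FS}=f^{*}\omega_{FS}+g^{*}\omega_{FS}+(te)^{*}\omega_{FS}=\omega$ exactly, and $\Phi$ is an embedding because its third component $te$ already is. Setting $k=N$ gives the theorem for that one value of $k$.

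For the clause ``for all $k\ge k_{0}$'', note that a linear inclusion $\CC P^{N}\hookrightarrow\CC P^{N+1}$ onto a hyperplane is a symplectic embedding for the standard Fubini--Study forms, so once $M$ embeds symplectically in $\CC P^{N}$ it does so in every $\CC P^{k}$ with $k\ge N$; one takes $k_{0}$ to be the least such $N$ (and if $\Omega_{k}$ is normalized to represent $k$ times the generator one first rescales $\omega$ by the appropriate integer, which is harmless as $[\omega]$ is integral). The genuine obstacle is the exactness in the second step: turning ``$\Phi^{*}\omega_{FS}$ cohomologous to $\omega$'' into ``$\Phi^{*}\omega_{FS}=\omega$''. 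A softer alternative is to make $\Phi^{*}\omega_{FS}$ not merely cohomologous but $C^{0}$-close to $\omega$ --- attainable for a high tensor power $\cL^{\otimes k}$ via asymptotically holomorphic (peak) sections and Bergman-kernel estimates --- and then apply Moser's stability theorem to get a diffeomorphism $\psi$ of $M$ with $\psi^{*}\Phi^{*}\omega_{FS}=\omega$, so that $\Phi\circ\psi$ is the desired embedding; in either route, controlling the pulled-back symplectic form rather than only its class is where the work lies, the rest being standard classifying-space and Segre-embedding bookkeeping.
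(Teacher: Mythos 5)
The paper offers no proof of Theorem~\ref{thm:gromov}: it is quoted from Tischler \cite{Tischler} (the Remark following Theorem~B there), so there is nothing internal to compare against. Your sketch is in fact a reconstruction of exactly that Tischler--Gromov argument: a classifying map $f$ realizing the class $[\omega]$, a shrunken Whitney embedding $te$ contributing a $C^0$-small exact error, a correction map $g$ absorbing the total exact discrepancy, and the Segre embedding to add the three pullbacks; the composite pulls $\omega_{FS}$ back to $\omega$ on the nose and is an embedding because the factor $te$ already is. The assembly is sound, as is the reduction of the ``for all $k\ge k_0$'' clause to linear inclusions $\CC P^N\hookrightarrow\CC P^k$.

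The one substantive gap is the existence of $g$: the assertion that an arbitrary prescribed exact $2$-form on a compact manifold can be realized \emph{exactly} as the pullback of the Fubini--Study form under a smooth map into projective space is precisely Tischler's Theorem~B, and it is where essentially all of the work in \cite{Tischler} lives. Your justification (``the higher-order discrepancy $\ldots$ removed by an iteration/implicit-function argument'') is a placeholder rather than a proof, and the difficulty is not purely perturbative: a map into a single affine chart of $\CC P^1$ cannot realize a form of large total ``area'' since the target has finite volume, which is why one must first write $\beta=\sum_j u_j\,dv_j$ and rescale each summand into many small pieces before mapping into a product of projective lines. Your alternative Moser-based route is also viable in principle, but there you must additionally verify $\Phi^*[\omega_{FS}]=[\omega]$ (note the normalization discrepancy between Theorem~\ref{thm:kodaira}, where the pullback class is $k[\omega]$, and Theorem~\ref{thm:gromov}, where it is $[\omega]$) and that the linear interpolation between $\omega$ and $\Phi^*\omega_{FS}$ remains nondegenerate, which the claimed $C^0$-closeness would supply. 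In short: correct strategy, the same as the cited source's, but with the key lemma assumed rather than proved.
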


Our next result is the symplectic analogue of Theorem \ref{thm:main}.

\begin{theorem}\label{thm:symplectic}
Any compact symplectic manifold $M$ with integral symplectic form $\omega$, 
parametrizes a natural smooth family of Cauchy-Riemann operators 
$\{\dbar_z: z\in M\}$ on $\CC P^1$ such that the determinant line bundle $\det(\dbar) \cong \cL$ as complex line bundles, where $\cL$ is the prequantum line bundle determined by $\omega$. 
\end{theorem}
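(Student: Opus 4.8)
The plan is to follow the same architecture as the proof of Theorem~\ref{thm:main}, replacing the Kodaira embedding theorem by Gromov's embedding theorem, Theorem~\ref{thm:gromov}. The starting point is Proposition~\ref{prop:proj}: for each $N$ the symmetric product $\mathrm{Sym}^N(\CC P^1)=\CC P^N$ is the moduli space of $N$-vortices on $\CC P^1$ of radius $R$ with $R^2=\tfrac12+N$, and the assignment $z\mapsto\dbar_z$ (where $\dbar_z=\dbar+A^{(0,1)}_z$ is the Cauchy--Riemann operator attached to the vortex solution with zero divisor $z$) is a natural family whose Quillen determinant line bundle is holomorphically isomorphic to $\cO(1)$ on $\CC P^N$. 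In particular, as a \emph{complex} line bundle, $\det(\dbar)\cong\cO(1)$, and its curvature (for the modified Quillen metric of Lemma~2) is $\tfrac{i}{2\pi}\Omega=\tfrac{i}{2\pi}\omega_{MN}$, a form cohomologous to $\omega_{FS}$.

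First I would apply Theorem~\ref{thm:gromov} to the compact symplectic manifold $(M,\omega)$: there is a minimal $k_0$ and, for each $k\ge k_0$, a symplectic embedding $\phi_k:M\hookrightarrow\CC P^k$ with $\phi_k^*(\Omega_k)=\omega$, where $\Omega_k$ is the Fubini--Study form on $\CC P^k$. Fix $k=k_0$ and set $N=k_0$. Then I would define the family of Cauchy--Riemann operators on $M$ by pullback: for $z\in M$, let $\dbar_z:=\dbar_{\phi_k(z)}$, the operator attached to the $N$-vortex on $\CC P^1$ whose position in $\mathrm{Sym}^N(\CC P^1)=\CC P^N$ is the point $\phi_k(z)$. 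Since $\phi_k$ is smooth and the family $w\mapsto\dbar_w$ over $\CC P^N$ is smooth (indeed holomorphic), this is a natural smooth family of Cauchy--Riemann operators on $\CC P^1$ parametrized by $M$. Its determinant line bundle is $\det(\dbar)=\phi_k^*\big(\det(\dbar)\big|_{\CC P^N}\big)=\phi_k^*(\cO(1))$ as complex line bundles.

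It remains to identify $\phi_k^*(\cO(1))$ with the prequantum line bundle $\cL$ determined by $\omega$. On a symplectic manifold the prequantum line bundle is a complex line bundle with a connection whose curvature is $-2\pi i\,\omega$ (equivalently, first Chern class $[\omega]\in H^2(M;\ZZ)$), and for complex (topological) line bundles on $M$ this Chern class is a complete invariant. Now $c_1(\phi_k^*\cO(1))=\phi_k^*c_1(\cO(1))=\phi_k^*[\Omega_k]=[\omega]$, using $\phi_k^*(\Omega_k)=\omega$ at the level of forms (hence of cohomology classes). Therefore $\phi_k^*(\cO(1))$ and $\cL$ have the same first Chern class and so are isomorphic as complex line bundles; composing with the isomorphism $\det(\dbar)\cong\phi_k^*(\cO(1))$ from the previous step gives $\det(\dbar)\cong\cL$, as required. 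Note that here no holomorphic structure is asserted on $M$ (none exists in general), so ``$\cong$'' means isomorphism of smooth complex line bundles; the modified Quillen metric on $\det(\dbar)$ moreover has curvature $\tfrac{i}{2\pi}\phi_k^*\Omega=\tfrac{i}{2\pi}\omega$ up to the cohomological normalization, matching the prequantum data up to a choice of connection.

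The main obstacle is a matter of bookkeeping rather than of substance: one must check that the family $z\mapsto\dbar_z=\dbar_{\phi_k(z)}$ genuinely qualifies as a ``smooth family of Cauchy--Riemann operators on $\CC P^1$'' in the sense needed for the determinant line bundle construction (Lemmas in \S\ref{sect:vortex}), i.e. that the Taubes--Bradlow parametrization $\mathrm{Sym}^N(\CC P^1)\cong\CC P^N$ can be made to produce the vortex connection $A^{(0,1)}_w$ depending smoothly on $w\in\CC P^N$, so that its pullback along the smooth map $\phi_k$ is again smooth. This is implicit in the holomorphicity of the family used for Theorem~\ref{thm:main}, so the symplectic statement follows with only the weakening from ``holomorphic'' to ``smooth'' everywhere, and the replacement of ``Picard-triviality forces holomorphic equivalence'' by ``$H^2(M;\ZZ)$ classifies complex line bundles''.
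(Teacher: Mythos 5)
Your proposal is correct and follows essentially the same route as the paper's (very terse) proof: establish the result on $\CC P^N$ via Proposition~\ref{prop:proj}, pull back the vortex family and its determinant bundle along the Gromov embedding $\phi_k$, and identify $\phi_k^*(\cO(1))$ with $\cL$ because the first Chern class $[\omega]$ classifies smooth complex line bundles. The details you supply (functoriality of $\det(\dbar)$ under pullback and the replacement of Picard-triviality by topological classification of line bundles) are exactly what the paper leaves implicit.
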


\begin{proof}
By Proposition \ref{prop:proj}, we know that Theorem \ref{thm:symplectic} is true for complex projective space 
$\CC P^N$ for all positive integers $N$. By Gromov's embedding theorem, Theorem \ref{thm:gromov}, 
and with the minimal choice of $k_0$ as before, Theorem \ref{thm:main} is true in the general case.
\end{proof}

\section{Conjecture}\label{sect:conj}

We conjecture that equivariant versions of our results are true. More precisely,

\begin{conj*}
Let $M$ be a compact K\"ahler $G$-manifold $M$ with integral K\"ahler form $\omega$,
where $G$ is a compact Lie group. That is, $\omega$ is a $G$-invariant K\"ahler 
form on the $G$-manifold $M$ and all structures are $G$-invariant. Let $\cL$ denote the 
holomorphic $G$-line bundle determined by $\omega$. Then
$M$ parametrizes a natural holomorphic $G$-equivariant family of 
Cauchy-Riemann operators 
$\{\dbar_z: z\in M\}$
on a compact K\"ahler $G$-manifold $Z$ such that the 
Quillen determinant line bundle $\det(\dbar) \cong \cL^{\otimes k}$ as holomorphic
$G$-line bundles, for some sufficiently large $k$.
\end{conj*}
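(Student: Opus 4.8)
The plan is to run the strategy of Theorem~\ref{thm:main} while carrying the $G$-action through every stage, reducing the general case to a $G$-equivariant form of Proposition~\ref{prop:proj} over a projective space. Since $G$ is compact I would first average to make all auxiliary data $G$-invariant: taking as given the lift of the $G$-action to $\cL$ posited in the statement, I choose a $G$-invariant Hermitian metric $h$ on $\cL$ whose Chern connection has curvature $\omega$ (the curvature condition is preserved by averaging because $\omega$ is already invariant). Then $V_k := H^0(M,\cO(\cL^{\otimes k}))$ is a finite-dimensional holomorphic representation of $G$, and the $L^2$-inner product built from $h^{\otimes k}$ and $\omega^n/n!$ is $G$-invariant, so $V_k$ is a unitary $G$-representation. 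For $k\ge k_0$ the Kodaira map $\phi_k\colon M\hookrightarrow \PP(V_k^*)$ is then $G$-\emph{equivariant} by naturality (it sends $z$ to the evaluation functional at $z$), the Fubini--Study form and the hyperplane bundle $\cO(1)$ on $\PP(V_k^*)$ are $G$-invariant and $G$-equivariant respectively, and the canonical isomorphism $\phi_k^*\cO(1)\cong\cL^{\otimes k}$ of Theorem~\ref{thm:kodaira} is $G$-equivariant. This reduces the conjecture to the following: construct a compact K\"ahler $G$-manifold $Z$ and a $G$-equivariant holomorphic family of Cauchy--Riemann operators, parametrized by $\PP(V_k^*)$, whose Quillen determinant is isomorphic to $\cO(1)$ as holomorphic $G$-line bundles; one then pulls the family back along the equivariant embedding $\phi_k$.

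Second, I would check that the Quillen construction is \emph{functorial} enough to carry the $G$-action for free, once the family is $G$-equivariant and the metrics are $G$-invariant. If $g\in G$ intertwines $\dbar_z$ with $\dbar_{g\cdot z}$ through the $G$-action on $Z$ and on the bundle over $Z$, then $g$ maps the Laplacian eigenspaces $K^a(\Delta_z)$ to $K^a(\Delta_{g\cdot z})$, hence acts on the determinant lines $\wedge^{\mathrm{top}}(\operatorname{Ker}\dbar_z)^*\otimes\wedge^{\mathrm{top}}(\operatorname{Coker}\dbar_z)$; this makes $\det(\dbar)$ a $G$-equivariant holomorphic line bundle over the parameter space, and the modified Quillen metric and its curvature $\tfrac{i}{2\pi}\Omega$ computed in the Lemmas of \S\ref{sect:vortex} are $G$-invariant. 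To upgrade the equality of first Chern classes to a $G$-equivariant holomorphic isomorphism, I would first use that the Picard group of $\PP(V_k^*)$ is $\ZZ$ to produce a holomorphic isomorphism $\det(\dbar)\cong\cO(1)$, and then observe that two $G$-linearizations of one holomorphic line bundle over a connected projective manifold differ by a character of $G$; this residual character is pinned down by comparing the induced $G$-representations on global holomorphic sections of the two bundles, both of which should be identified with $V_k$, after which averaging the isomorphism over $G$ (using compactness) yields a genuinely $G$-equivariant isomorphism.

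The real difficulty, and the reason the statement is only a conjecture, is the construction of the $G$-equivariant family realizing $\cO(1)$ on $\PP(V_k^*)$. In the non-equivariant proof this used the vortex identification $\Sym^N(\CC P^1)=\CC P^N$, with operators $\dbar_z$ living on $Z=\CC P^1$. That model \emph{cannot} be made $G$-equivariant for a general representation: any $G$-action on $\CC P^1$ lifting to $\cO(N)$ acts on $H^0(\CC P^1,\cO(N))=\Sym^N(\CC^2)^*$ through $\PGL(2)$, so the induced action on $\CC P^N=\PP(\Sym^N(\CC^2)^*)$ is always of symmetric-power type, whereas the linear $G$-action on $\PP(V_k^*)$ coming from an arbitrary $V_k$ is not. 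One is therefore forced to allow $Z$ to be a genuinely higher-dimensional $G$-manifold rather than the sphere.

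The path I would pursue for this last step is to realize $V_k$ itself equivariantly as a space of holomorphic sections over a $G$-manifold --- for instance, via a Borel--Weil type realization of $V_k$ as $H^0(Z,E)$ for a compact K\"ahler $G$-manifold $Z$ (built from $G$-flag manifolds) and a $G$-equivariant holomorphic bundle $E$ --- and then to set up a $G$-equivariant vortex, or Bradlow-pair, moduli problem on $(Z,E)$ whose moduli space is $\PP(V_k^*)$ carrying exactly the linear $G$-action and whose Quillen determinant line bundle is $\cO(1)$ equivariantly. The hard part is precisely this: producing a $G$-equivariant analogue of the Taubes--Bradlow correspondence in which \emph{both} the moduli space is $\PP(V_k^*)$ \emph{and} the determinant line bundle is $\cO(1)$, with matching $G$-linearizations. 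Establishing such an equivariant moduli-plus-determinant identification --- rather than the equivariant bookkeeping of the first two steps, which is routine given compactness of $G$ --- is the crux that would turn the conjecture into a theorem.
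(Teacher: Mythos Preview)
The paper does not prove this statement: it is explicitly a conjecture (see \S\ref{sect:conj}), with no argument offered beyond the remark that it would apply to coadjoint orbits and to the ``quantization commutes with reduction'' circle of ideas. There is therefore no paper proof to compare your proposal against.

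Viewed on its own terms, your outline is a sensible attack and you have put your finger on the genuine obstruction. The first two stages --- making the Kodaira embedding $G$-equivariant by averaging $h$ and using naturality of $\phi_k$, and transporting the $G$-action through Quillen's eigenspace construction --- are indeed routine for compact $G$. Your observation that the vortex model on $\CC P^1$ forces the induced $G$-action on $\CC P^N$ to factor through $\PGL(2)$ acting on $\Sym^N(\CC^2)$, and hence cannot realize an arbitrary unitary representation $V_k$, is exactly right and is the reason the non-equivariant proof does not lift verbatim; the conjecture accordingly allows $Z$ to be higher-dimensional. One small gap in your bookkeeping: in Step~2 you fix the residual character by asserting that the global sections of $\det(\dbar)$ ``should be identified with $V_k$'', but that is part of what you are trying to prove --- a priori the $G$-module of sections could be $V_k\otimes\chi$. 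You would need an independent computation (e.g.\ of the $G$-action on the determinant line over a single fixed point, or of the equivariant first Chern class) to pin down $\chi$.

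The substantive incomplete step, as you yourself say, is the construction of a $G$-equivariant family over $\PP(V_k^*)$ with determinant $\cO(1)$. Your Borel--Weil suggestion is natural, but producing on a flag manifold $Z$ an analogue of the Taubes--Bradlow correspondence whose moduli space is $\PP(V_k^*)$ \emph{with the correct linear $G$-action} and whose determinant is $\cO(1)$ is a genuine open problem, not a matter of adapting the arguments of \S\ref{sect:vortex}. So your proposal is an accurate diagnosis of what would be required, rather than a proof.
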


There is an analogous conjecture in the equivariant symplectic case.
If the conjectures are valid, then they would apply to coadjoint integral maximal orbits and 
thus to representation theory. It would also be relevant to the popularly known principle, ``quantization commutes with reduction''
of Guillemin and Sternberg, cf. \cite{GS}

\bigskip

\end{document}